\documentclass[reqno]{amsart}

\usepackage{booktabs}
\usepackage{IEEEtrantools}
\usepackage{amssymb,latexsym,amsfonts,amsmath}
\usepackage{mathrsfs}
\usepackage{graphicx}
\usepackage{dsfont}
\topmargin  = 0.0 in
\leftmargin = 0.9 in
\rightmargin = 1.0 in
\evensidemargin = -0.10 in
\oddsidemargin =  0.10 in
\textheight = 8.5 in
\textwidth  = 6.6 in
\setlength{\parskip}{2mm}
\setlength{\parindent}{0mm}

\usepackage{tikz}
\usetikzlibrary{calc,shapes,arrows}

% \usepackage{algorithm}
% %\usepackage{algorithmic}
% \usepackage[noend]{algpseudocode}
% Define commands for the squares
\definecolor{mediumblue}{rgb}{0.0, 0.0, 0.8}
\definecolor{mediumcandyapplered}{rgb}{0.89, 0.02, 0.17}
\definecolor{nazar}{rgb}{0.7, 0.5, 0.9}
\makeatletter
\let\NAT@parse\undefined
\makeatother
\usepackage[colorlinks=true, citecolor=black, linkcolor=mediumblue, urlcolor = nazar, final]{hyperref}

\definecolor{lightblue}{rgb}{0.30,0.75,0.93}
\newcommand{\greensquare}{\tikz\fill[green!20!white] (0,0) rectangle (2mm,2mm);}
\newcommand{\redsquare}{\tikz\fill[red!20!white] (0,0) rectangle (2mm,2mm);}
\newcommand{\reddsquare}{\tikz\fill[red!70!white] (0,0) rectangle (2mm,2mm);}
\newcommand{\bluesquare}{\tikz\fill[lightblue!60!white] (0,0) rectangle (2mm,2mm);}

\DeclareRobustCommand\sampleline[1]{%
	\tikz\draw[#1] (0,0) (0,\the\dimexpr\fontdimen22\textfont2\relax)
	-- (2em,\the\dimexpr\fontdimen22\textfont2\relax);%
}

\usepackage{algorithm}
\usepackage{algorithmic}

\usepackage{xspace}

\newtheorem{theorem}{Theorem}[section]
\newtheorem{lemma}[theorem]{Lemma}

\newtheorem{problem}[theorem]{Problem}

\newtheorem{definition}[theorem]{Definition}

\newtheorem{remark}[theorem]{Remark}

\numberwithin{equation}{section}

\usepackage{fancyhdr}

\newenvironment{nouppercase}{%
	\renewcommand{\uppercasenonmath}[1]{}}{}

\linespread{1.5}
%=======added=============
\usepackage{amsmath}
\usepackage[many]{tcolorbox}
\usetikzlibrary{calc}
\tcbuselibrary{skins}

\newtcolorbox{resp}[1][]{%
	enhanced jigsaw,%
	colback=gray!5!white,%
	colframe=gray!80!black,%
	size=small,%
	boxrule=1pt,%
	halign title=flush center,%
	coltitle=black,%
	breakable,%
	drop shadow=black!50!white,%
	attach boxed title to top left={xshift=1cm,yshift=-\tcboxedtitleheight/2,yshifttext=-\tcboxedtitleheight/2},%
	minipage boxed title=3cm,%
	boxed title style={%
		colback=white,%
		size=fbox,%
		boxrule=1pt,%
		boxsep=2pt,%
		underlay={%
			\coordinate (dotA) at ($(interior.west) + (-0.5pt,0)$);
			\coordinate (dotB) at ($(interior.east) + (0.5pt,0)$);
			\begin{scope}[gray!80!black]
				\fill (dotA) circle (2pt);
				\fill (dotB) circle (2pt);
			\end{scope}
		}%
	},%
	#1%
}

\begin{document}

\begin{abstract}
This work is concerned with developing a data-driven approach for learning \textit{control barrier certificates} (CBCs) and associated \textit{safety controllers} for discrete-time nonlinear polynomial systems with \textit{unknown mathematical models}, guaranteeing system safety over an \textit{infinite time horizon}. The proposed approach leverages measured data acquired through an input-output observation, referred to as \textit{a single trajectory}, collected over a specified time horizon. By fulfilling a certain rank condition, which ensures the unknown system is \textit{persistently excited} by the collected data, we design a CBC and its corresponding safety controller directly from the finite-length observed data, \textit{without explicitly identifying} the unknown dynamical system. This is achieved through proposing a data-based sum-of-squares optimization (SOS) program to systematically design \textit{CBCs} and their safety controllers. We validate our data-driven approach over two physical case studies including a jet engine and a Lorenz system, demonstrating the efficacy of our proposed method.
\end{abstract}

\title{{\LARGE From a Single Trajectory to Safety Controller Synthesis of\vspace{0.2cm}\\ Discrete-Time Nonlinear Polynomial Systems\vspace{0.4cm}}}

\author{{\bf {\large Behrad Samari}}}
\author{{\bf {\large Omid Akbarzadeh}}}
\author{{\bf {\large Mahdieh Zaker}}}
\author{{{\bf {\large Abolfazl Lavaei}}}\vspace{0.4cm}\\
	{\normalfont School of Computing, Newcastle University, United Kingdom}\\
\texttt{\{b.samari2, o.akbarzadeh2, m.zaker2, abolfazl.lavaei\}@newcastle.ac.uk}}

\pagestyle{fancy}
\lhead{}
\rhead{}
  \fancyhead[OL]{Behrad Samari, Omid Akbarzadeh, Mahdieh Zaker, and Abolfazl Lavaei}

  \fancyhead[EL]{From a Single Trajectory to Safety Controller Synthesis of Discrete-Time Nonlinear Polynomial Systems}
  \rhead{\thepage}
 \cfoot{}
 
\begin{nouppercase}
	\maketitle
\end{nouppercase}

\section{Introduction}\label{sec: intro}
Nonlinear dynamical systems are fundamental to the scientific understanding of almost all practical real-world systems, particularly those with \emph{safety-critical applications}. However, analyzing and synthesizing controllers for such complex systems has always been a challenging task—especially when their precise mathematical models are \textit{unknown}, a situation that reflects the complexities of the modern world. As a result, the control community has increasingly leaned toward \textit{data-driven} approaches for control analysis and design, where the desired controller is learned and derived from collected data. Nonetheless, providing \textit{formal guarantees} in data-driven control for safety-critical systems that lack precise mathematical models remains a significant challenge.

In addressing the challenge of formally synthesizing controllers for unknown dynamical systems, two distinct yet feasible approaches have been proposed in the relevant literature: (i) integrating system identification with model-based techniques (known as the \textit{indirect} data-driven approach), and (ii) deriving the desired controllers directly from data (known as the \textit{direct} data-driven approach), bypassing the need for an intermediate modeling step~\cite{hou2013model,dorfler2022bridging}. While the first approach leverages the vast array of effective model-based methods for controller synthesis once the identification phase is complete, existing system identification techniques still face significant limitations with respect to the \emph{nonlinear class of unknown systems}~\cite{kerschen2006past}, which can make the process extremely challenging and time-consuming. Even if the precise mathematical model of a system can be identified, these indirect approaches suffer from a \emph{two-layer} complexity: first identifying the model and then solving the problem using model-based methods. In contrast, direct data-driven approaches have a \emph{single-layer} complexity, as they leverage the collected data directly for control analysis.

To provide formal safety controllers for complex dynamical systems with \emph{continuous state spaces}, the existing literature introduces the concept of \textit{control barrier certificates (CBCs)}~\cite{prajna2004safety, prajna2007framework, lavaei2022automated}. In broad terms, a CBC is akin to a \textit{Lyapunov function} that is defined over the state set of the dynamical system, enforcing a set of inequalities on both itself and its one-step ahead alongside the system's discrete-time dynamic. Consequently, if there exists a suitable level set of a CBC that can distinguish an unsafe region from all system trajectories starting from a given initial set, then the CBC's existence offers a formal (probabilistic) guarantee of system safety (see \emph{e.g.,}~\cite{ames2019control,clark2021control,borrmann2015control,nejati2022dissipativity,lavaei2024scalable,nejati2024context,anand2022small,jahanshahi2022compositional,wooding2024protect}, to name a few). A common assumption in all these studies is that the mathematical model of the dynamical system is available. However, in many real-world applications, this precise model is either unavailable or too complex to be practically used, as mentioned earlier.

In recent years, a wide range of data-driven approaches has been proposed to address the challenge of analyzing unknown dynamical systems. Existing studies tackle various problems, including linear quadratic regulation and robust controller design~\cite{de2021low, berberich2020robust, berberich2022combining}, model-reference controller design for linear dynamic systems~\cite{breschi2021direct}, and stabilization of polynomial systems~\cite{guo2021data}, although none of these approaches account for state constraints. More recently, data-driven methods have been introduced for synthesizing state-feedback controllers that ensure a compact polyhedral set containing the origin remains robustly invariant~\cite{bisoffi2020data, bisoffi2022controller}. However, these methods are potentially conservative, as controllers may still exist for subsets of the polytope in cases where no controller is found for the entire specified polyhedral set. Additionally, some approaches leverage \emph{scenario-based programs} to synthesize barrier certificates for the control analysis (\emph{e.g.,}~\cite{nejati2023formal,salamati2024data}). However, these methods require the data to be \textit{independent and identically distributed (i.i.d.)}, which necessitates collecting multiple trajectories—often millions in practical cases—from the unknown system, as opposed to our approach, which requires only a \emph{single input-output trajectory} for the control analysis.

It is worth noting that while there is an extensive body of literature on control analysis for \emph{continuous-time} nonlinear polynomial systems (e.g.,~\cite{bisoffi2020data,bisoffi2022controller,nejati2022data}) using the concept of persistency of excitation introduced by \emph{Willems' fundamental lemma}~\cite{willems2005note}, no work in this area addresses \emph{discrete-time} nonlinear polynomial systems to provide safety certificates using \textit{CBCs}. In particular, while one might assume that providing results for continuous-time systems is more challenging, developing data-driven approaches using the concept of a single trajectory for \textit{discrete-time} unknown systems is actually more difficult (see~\cite{martin2023guarantees}). For instance, the form of CBC for continuous-time nonlinear polynomial systems proposed in~\cite{nejati2022data} cannot be applied to \emph{discrete-time}  systems since the one-step ahead of the proposed function, required in the last condition of the CBC, depends on the one-step-ahead values of the system's monomials, which are not available. More precisely, the work~\cite{nejati2022data} proposes $\mathds{B}(x(k)) = \mathcal{M}^\top(x(k))P \mathcal{M}(x(k))$ as a potential CBC, where $\mathcal{M}(x(k))$ is the system's monomials at time step $k$, and the one-step ahead of this function is $\mathds{B}(x(k+1)) = \mathcal{M}^\top(x(k+1))P \mathcal{M}(x(k+1))$ (cf. Definitions~\ref{def: dt-NPS} and~\ref{def: CBC}). This leads to a dead-end in the context of \emph{discrete-time} nonlinear polynomial systems since the data-based presentation of $\mathcal{M}(x(k+1))$ is unavailable (cf. Lemma~\ref{lem: closed-loop}).

Motivated by this key challenge, this work develops a data-driven method for learning \textit{control barrier certificates} and corresponding \textit{safety controllers} for unknown \textit{discrete-time} nonlinear polynomial systems, ensuring system safety. The approach leverages measured data obtained from input-output observations, referred to as a \textit{single trajectory}, collected during a specified experimental horizon. Under the assumption that the data is persistently excited, satisfying a rank condition, we design control barrier certificates and their associated safety controllers directly from the \textit{finite-length} observed data, through a data-based sum-of-squares optimization program. Unlike the approach in~\cite{nejati2023formal,salamati2024data}, our safety guarantees do not require a large amount of data, nor are they based on probabilistic confidence levels. Additionally, our proposed CBC differs from the form proposed in~\cite{nejati2022data}, as we specifically tackle the more challenging problem in \emph{discrete-time} dynamical systems.

\section{Problem Formulation}\label{sec: problem}
\subsection{Notation}
Sets of real, positive, and non-negative real numbers are denoted by $\mathbb{R},\mathbb{R}^+$, and $\mathbb{R}^+_0$, respectively. Sets of non-negative and positive integers are signified by $\mathbb{N}=\{0,1, 2,\ldots\}$ and $\mathbb{N}^+=\{1,2,\ldots\}$, respectively. An $n \times n$ identity matrix is represented by $\mathds{I}_n$.
Given $N$ vectors $x_i \in \mathbb{R}^{n}$, $x=[x_1 \, \, \ldots \,\, x_N]$ denotes the corresponding matrix of dimension $n \times N$.
We signify that a \emph{symmetric} matrix $P$ is positive definite by $P \succ 0$, while $P \succeq 0$ denotes that $P$ is a \emph{symmetric} positive semi-definite matrix.
The transpose of a matrix $P$ is indicated by $P^\top \!$, while its left pseudoinverse is represented by $P^\dagger$. A star $(\star)$ in a symmetric matrix presents the transposed element in the symmetric position. 

\subsection{Discrete-Time Nonlinear Polynomial Systems}
We begin by introducing the discrete-time nonlinear polynomial systems as the primary model for which we intend to derive a safety certificate.

\begin{definition}[\textbf{dt-NPS}]\label{def: dt-NPS}
	A discrete-time nonlinear polynomial system (dt-NPS) is defined by
	\begin{align}\label{eq: dt-NPS}
		\Sigma\!: x(k+1)=A\mathcal M(x(k)) + Bu(k),
	\end{align}
	where $A \in \mathbb R^{n\times {M}}$ is the system matrix, $B \in \mathbb R^{n\times m}$ is the control matrix, $\mathcal M(x) \in \mathbb R^{M}$ is a vector of monomials in states $x\in  X$, and $u \in U$  is the control input of dt-NPS, with $X \subseteq \mathbb{R}^n$ and $U \subseteq \mathbb{R}^m$ being state and control input sets, respectively. We represent dt-NPS~\eqref{eq: dt-NPS} using the tuple $\Sigma = \left(A, B, X, U\right)$, and denote by $x_{x_0 u}(k)$ the state trajectory of $\Sigma$ achieved at time step $k \in \mathbb{N}$ under a control input $u$ from an initial condition $x_0 = x(0)$.
\end{definition}

In the context of our work, we assume matrices $A$ and $B$ are \emph{unknown}, which mirrors real-world scenarios, and the term \emph{unknown model} is employed to refer to such systems. As the primary goal of this work is to provide a safety certificate for unknown dt-NPS~\eqref{eq: dt-NPS} using the notion of \textit{control barrier certificate}, we formally introduce such a certificate in the following subsection.

\subsection{Control Barrier Certificates}
\begin{definition}[\textbf{CBC}]\label{def: CBC}
	Consider a dt-NPS
	$\Sigma = \left(A, B, X, U\right)\!,$ with $X_0 \subset X$ and $X_u \subset X$ being its initial and unsafe sets, respectively. Assuming the existence of constants $\alpha_1,\alpha_2 \in \mathbb{R}^{+}$, with $\alpha_2 > \alpha_1$, a function $\mathds B:X \to \mathbb{R}_0^+$ is called a control barrier certificate (CBC) for $\Sigma$ if
	\begin{subequations}\label{eq: CBC}
		\begin{align}
			&  \:\:  \mathds B(x) \leq \alpha_1, \hspace{2.55cm} \forall x \in X_{0},\label{subeq: initial}\\
			&  \:\:  \mathds B(x) \geq \alpha_2, \hspace{2.55cm} \forall x \in X_{u}, \label{subeq: unsafe}
		\end{align}  
		and $\forall x\in X$, $\exists u\in U$, such that
		\begin{align}\label{subeq: decreasing}
			\mathds B(x(k+1)) \leq  \mathds B(x(k)).
		\end{align}
	\end{subequations}
\end{definition}

In broad terms, the existence of a CBC for $\Sigma$, as introduced in Definition~\ref{def: CBC},  indicates that the evolution of states never leads to reaching the unsafe set $X_u$ if it is started from any initial conditions within the initial set $X_0$. The following theorem, borrowed from~\cite{prajna2004safety}, leverages the notion of CBC and establishes safety certificates across dt-NPS~\eqref{eq: dt-NPS} within an infinite time horizon.

\begin{theorem}[\textbf{Infinite-horizon safety}]\label{thm: model-based}
	Consider a dt-NPS $\Sigma$ as introduced in Definition~\ref{def: dt-NPS} with $X_0$ and $X_u$ being its initial and unsafe sets, respectively. Suppose $\mathds{B}$ is a CBC for $\Sigma$ as in Definition~\ref{def: CBC}. The dt-NPS is safe in the sense that states' trajectories of dt-NPS never reach the unsafe set $X_u$ starting from the initial set $X_0$, \emph{i.e.,} $x_{x_0 u} \notin X_u$ for any $x_0 \in X_0$ and any $k \in \mathbb{N}$, under the control input $u(\cdot)$ being designed such that~\eqref{subeq: decreasing} holds true.
\end{theorem}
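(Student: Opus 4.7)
The plan is to argue by a straightforward induction on the time step $k$, combined with a contradiction at the end using the separation of the sublevel sets of $\mathds B$ provided by the constants $\alpha_1 < \alpha_2$.

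First I would fix an arbitrary initial condition $x_0 \in X_0$ and select, at every time step $k$, a control input $u(k) \in U$ that witnesses condition~\eqref{subeq: decreasing}, so that the closed-loop trajectory $x_{x_0 u}(k)$ is well defined for all $k \in \mathbb{N}$. The aim of the induction is to prove the uniform sublevel-set invariance statement
\begin{equation*}
\mathds B\bigl(x_{x_0 u}(k)\bigr) \leq \alpha_1 \qquad \text{for all } k \in \mathbb{N}.
\end{equation*}
The base case $k=0$ follows immediately from~\eqref{subeq: initial}, since $x_{x_0 u}(0) = x_0 \in X_0$. For the inductive step, assuming the bound holds at time $k$, the chosen control input ensures via~\eqref{subeq: decreasing} that
\begin{equation*}
\mathds B\bigl(x_{x_0 u}(k+1)\bigr) \leq \mathds B\bigl(x_{x_0 u}(k)\bigr) \leq \alpha_1,
\end{equation*}
which closes the induction.

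Next I would conclude by contradiction. Suppose there exists some $k^\star \in \mathbb{N}$ with $x_{x_0 u}(k^\star) \in X_u$. Then by~\eqref{subeq: unsafe} one has $\mathds B(x_{x_0 u}(k^\star)) \geq \alpha_2$, while by the induction above $\mathds B(x_{x_0 u}(k^\star)) \leq \alpha_1$. Combined with the strict ordering $\alpha_2 > \alpha_1$ assumed in Definition~\ref{def: CBC}, this yields $\alpha_2 \leq \alpha_1 < \alpha_2$, a contradiction. Hence $x_{x_0 u}(k) \notin X_u$ for every $k \in \mathbb{N}$, and since $x_0 \in X_0$ was arbitrary, the safety property claimed in the theorem follows.

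There is no genuine obstacle in this argument: the only subtlety worth flagging is the measurable/constructive selection of the control input $u(k)$ at each step, which is guaranteed by the existential quantifier in~\eqref{subeq: decreasing} and which, in the data-driven development that follows, will be realized as an explicit state-feedback law synthesized from the single-trajectory data. The proof is essentially a discrete-time analogue of the classical barrier certificate argument of~\cite{prajna2004safety}, with the monotone non-increase~\eqref{subeq: decreasing} replacing the infinitesimal derivative condition used in the continuous-time setting.
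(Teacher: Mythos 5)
Your proof is correct and is exactly the standard argument: the paper itself states Theorem~\ref{thm: model-based} without proof, citing~\cite{prajna2004safety}, and the induction giving $\mathds B(x_{x_0u}(k))\leq\alpha_1$ for all $k$ via~\eqref{subeq: initial} and~\eqref{subeq: decreasing}, followed by the level-set separation $\alpha_1<\alpha_2$ against~\eqref{subeq: unsafe}, is precisely the reasoning that reference (and this paper implicitly) relies on. No gap here; your remark about selecting a witness input $u(k)$ at each step is the right reading of the existential quantifier in~\eqref{subeq: decreasing}.
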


Constructing a CBC and its safety controller for a given dt-NPS $\Sigma$ to ensure system safety, as described in Theorem~\ref{thm: model-based}, requires precise knowledge of the matrices $A$ and $B$, due to their role in~\eqref{subeq: decreasing}. However, in many real-world applications, these matrices are not available, which serves as the primary motivation for this work. Given this significant challenge, we now formally present the problem addressed in our paper.

\begin{resp}
	\begin{problem}
		Consider a dt-NPS $\Sigma$ in Definition~\ref{def: dt-NPS} characterized by \emph{unknown matrices} $A$ and $B$, along with initial and unsafe sets $X_0, X_u \subset X$. By collecting only \emph{one single input-output trajectory} from $\Sigma$, develop a formal data-driven framework upon which a CBC and its corresponding safety controller, as in Definition~\ref{def: CBC}, can be designed using data, thus ensuring the system's safety, as in Theorem~\ref{thm: model-based}.
	\end{problem}
\end{resp}

Having introduced the primary problem aimed at our work, we now proceed with proposing our data-driven framework in the following section.

\section{Data-Driven Construction of CBCs and Safety Controllers}\label{sec: data scheme}
This section is endowed with the data-driven scheme proposed for designing a CBC and its corresponding safety controller for the unknown dt-NPS in Definition~\ref{def: dt-NPS}. To this end, we first propose the structure of the CBC in the quadratic form $\mathds{B}(x) = x^\top P x$, where $P \succ 0$. Subsequently, under the assumption of having all system's states and inputs measured from the unknown dt-NPS~\eqref{eq: dt-NPS}, we collect input-output data (i.e., measurements) within the time horizon $\left[0, \mathcal{T}\right]$, with $\mathcal{T} \in \mathbb{N}^+$ being the number of gathered samples:
\begin{subequations}\label{eq: data}
	\begin{align}
		\mathds{U}_-  & = \left[\begin{array}{cccc}
			\!\! \! u(0) & u(1) & \dots & u(\mathcal{T} - 1) \!\!\!\!
		\end{array}\right]\!,\\
		\mathds{X}_-  & = \left[\begin{array}{cccc}
			\!\! \! x(0) & x(1) & \dots & x(\mathcal{T} - 1) \!\!\!\!
		\end{array}\right]\!,\\
		\mathds{X}_+  & = \left[\begin{array}{cccc}
			\!\! \! x(1) & x(2) & \dots & x(\mathcal{T}) \!\!\!\!
		\end{array}\right]\!.
	\end{align}
\end{subequations}
We recall that the set of input-output trajectories in~\eqref{eq: data} is considered a \textit{single trajectory} of the unknown dt-NPS. 

\begin{remark}
It is worth highlighting that the trajectory of $\mathds{X}_+$ can be precisely collected in the \emph{discrete-time} setting, as the unknown dynamics of the system evolve recursively at discrete time intervals. However, this is not the case in continuous time~\cite{nejati2022data}, where $\mathds{X}_+$ includes derivatives of the state at sampling times, which are generally not available as direct measurements. To tackle this difficulty in continuous-time systems, appropriate filters must be used to approximate these derivatives, while introducing some errors in the approximation process. In contrast, our discrete-time setting provides \emph{exact measurements} of the system’s evolution, representing a key advantage of the  \emph{discrete-time} data-driven approach over continuous-time methods.
\end{remark}

With the inspiration of~\cite{de2019formulas,guo2021data}, we introduce the required conditions that pave the way for characterizing the data-based presentation of dt-NPS in the subsequent lemma.

\begin{lemma}[\textbf{Data-based dt-NPS}]\label{lem: closed-loop}
	Let us suppose $\mathds{Q}(x) \in \mathbb{R}^{\mathcal{T} \times n}$ is a state-dependent matrix such that
	\begin{align}
		\Theta(x) = \mathds{M}_- \mathds{Q}(x), \label{eq: cond_Q}
	\end{align}
	where $\Theta(x) \in \mathbb{R}^{M \times n}$ is a state-dependent \emph{transformation} matrix satisfying
	\begin{align}
		\mathcal{M}(x) = \Theta(x) x, \label{eq: transformation}
	\end{align}
	and $\mathds{M}_-$ is a \emph{full row-rank} matrix formed from the vector $\mathcal{M}(x)$ and samples $\mathds{X}_-$ as
	\begin{align}
		\mathds{M}_- = \left[\begin{array}{cccc}
			\!\! \! \mathcal{M}(x(0)) & \mathcal{M}(x(1)) & \dots & \mathcal{M}(x(\mathcal{T} - 1)) \!\!\!\!
		\end{array}\right]\!\!. \label{eq: M of M}
	\end{align}
	If the control input is synthesized as $u = \digamma(x) x$ with $\digamma(x) = \mathds{U}_- \mathds{Q}(x)$, then the data-based presentation of the closed-loop system is obtained as
	\begin{align}
		A\mathcal M(x) + Bu = (A \Theta(x) + B\digamma(x))x = (\mathds{X}_+ \mathds{Q}(x))x. \label{eq: closed-loop}
	\end{align}
\end{lemma}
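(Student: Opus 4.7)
The plan is to verify both equalities in \eqref{eq: closed-loop} by direct substitution, after first observing the crucial data identity that links the collected samples to the unknown matrices $A$ and $B$. Since the system evolves according to~\eqref{eq: dt-NPS}, every column of $\mathds{X}_+$ is generated from the corresponding columns of $\mathds{M}_-$ and $\mathds{U}_-$; stacking these $\mathcal{T}$ relations column-wise yields the matrix identity
\begin{equation*}
\mathds{X}_+ \;=\; A\,\mathds{M}_- + B\,\mathds{U}_-,
\end{equation*}
which is the key data-based presentation that makes the unknown $A,B$ disappear in what follows.

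Next, for the \emph{first} equality, I would simply substitute $\mathcal{M}(x) = \Theta(x)x$ from~\eqref{eq: transformation} and $u = \digamma(x) x$ into $A\mathcal{M}(x) + Bu$, factor out $x$ on the right, and obtain $A\mathcal{M}(x) + Bu = (A\Theta(x) + B\digamma(x))x$. This step is purely algebraic and uses only the two structural assumptions on $\Theta(x)$ and the chosen form of $u$.

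For the \emph{second} equality, I would plug $\Theta(x) = \mathds{M}_-\mathds{Q}(x)$ from~\eqref{eq: cond_Q} and $\digamma(x) = \mathds{U}_-\mathds{Q}(x)$ into the expression just obtained, giving
\begin{equation*}
(A\Theta(x)+B\digamma(x))x \;=\; \bigl(A\mathds{M}_- + B\mathds{U}_-\bigr)\mathds{Q}(x)\,x,
\end{equation*}
and then invoke the data identity above to replace $A\mathds{M}_- + B\mathds{U}_-$ by $\mathds{X}_+$, producing $(\mathds{X}_+\mathds{Q}(x))x$ as required. The full row-rank of $\mathds{M}_-$ is not needed to verify the identity itself; it is the standing assumption that guarantees such a $\mathds{Q}(x)$ satisfying~\eqref{eq: cond_Q} exists, which justifies that the construction is not vacuous.

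There is no genuine obstacle in this proof: once the matrix-level data equation $\mathds{X}_+ = A\mathds{M}_- + B\mathds{U}_-$ is written down, both equalities reduce to one-line substitutions. The only subtle point worth emphasizing in the write-up is that the entire argument works \emph{pointwise in $x$}, because $\mathds{Q}(x)$, $\Theta(x)$, and $\digamma(x)$ all depend on the current state while $\mathds{M}_-$, $\mathds{U}_-$, and $\mathds{X}_+$ are fixed data matrices; this is precisely the feature that lets the unknown matrices $A$ and $B$ be eliminated from the closed-loop representation without system identification.
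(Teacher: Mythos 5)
Your proposal is correct and follows essentially the same route as the paper's proof: establish the data identity $\mathds{X}_+ = A\mathds{M}_- + B\mathds{U}_-$, substitute \eqref{eq: transformation} and $u = \digamma(x)x$ to get the first equality, then use \eqref{eq: cond_Q} and $\digamma(x) = \mathds{U}_-\mathds{Q}(x)$ to factor out $\mathds{Q}(x)$ and invoke the data identity for the second. Your added remark that the full row-rank of $\mathds{M}_-$ serves only to guarantee existence of $\mathds{Q}(x)$, not to verify the identity, is accurate and consistent with the paper.
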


\begin{proof}
	Since we set $u = \digamma(x) x$, and considering~\eqref{eq: transformation}, we have the closed-loop dt-NPS  as 
	\begin{align*}
		A\mathcal{M}(x) + Bu = A \Theta(x)x + Bu = (A \Theta(x) + B\digamma(x))x.
	\end{align*}
	Now, according to~\eqref{eq: cond_Q} and since $\digamma(x) = \mathds{U}_- \mathds{Q}(x)$, one has
	\begin{align}\label{new}
		A\mathcal{M}(x) + Bu = (A \mathds{M}_- \mathds{Q}(x) + B \mathds{U}_- \mathds{Q}(x))x= (A \mathds{M}_-  + B \mathds{U}_-)\mathds{Q}(x)x.
	\end{align}
	On the other hand, according to the input-output data collected in~\eqref{eq: data} and matrix $\mathds{M}_-$ in~\eqref{eq: M of M}, we clearly have
	\begin{align}\label{new1}
		\mathds{X}_+ = A \mathds{M}_- + B \mathds{U}_-.
	\end{align}
	Thus from~\eqref{new} and~\eqref{new1}, one can deduce that
	\begin{align*}
		A\mathcal{M}(x) + Bu = (\mathds{X}_+ \mathds{Q}(x))x,
	\end{align*}
	which completes the proof.
\end{proof}

\begin{remark}[\textbf{Rank condition}]
	~In order to have a \emph{full row-rank matrix} $\mathds{M}_-$, one must collect at least $M$ points of data, meaning that $\mathcal{T}$ must be at least equal to $M$. It should be acknowledged that this requirement is readily verifiable since matrix $\mathds{M}_-$ is formed from data.
\end{remark}

\begin{remark}
Note that condition~\eqref{eq: transformation} plays a key role in our approach, as it ultimately translates everything in terms of $x$ rather than $\mathcal M(x)$, which is consistent with the form of our CBC $\mathds{B}(x) = x^\top P x$ and facilitates the proof of our main results in Theorem~\ref{thm: main}. Without loss of generality, there always exists a transformation $\Theta(x)$ that satisfies condition~\eqref{eq: transformation}, allowing us to form $\mathcal{M}(x)$ accordingly.
\end{remark}

Having obtained the dt-NPS's closed-loop data-based representation, as in Lemma~\ref{lem: closed-loop}, we now raise the following theorem as the primary contribution of this work, allowing one to design a CBC and its corresponding safety controller solely on the basis of a \textit{single trajectory} from dt-NPS.

\begin{theorem}[\textbf{Data-driven CBC and safety controller}]\label{thm: main}
	Consider an unknown dt-NPS $\Sigma$, as in Definition~\ref{def: dt-NPS}, with its closed-loop data-based representation $A\mathcal{M}(x) + Bu=(\mathds{X}_+ \mathds{Q}(x))x$, as in Lemma~\ref{lem: closed-loop}. Suppose there exists a state-dependent matrix $\mathds{H}(x) \in \mathbb{R}^{\mathcal{T} \times n}$ such that
	\begin{align}
		\mathds{M}_- \mathds{H}(x) = \Theta(x) P^{-1}.\label{eq: H&P}
	\end{align}
	Now, if there exist constants $\alpha_1, \alpha_2 \in \mathbb{R}^{+}$, with $\alpha_2 > \alpha_1$, such that the following conditions are satisfied
	\begin{subequations}\label{eq: key-conditions}
		\begin{align}
			&x^\top \left[\Theta^\dagger \mathds{M}_- \mathds{H}(x)\right]^{-1} x \leq \alpha_1, \quad\quad \quad \forall x \in X_0, \label{subeq: thm-ini}\\
			&x^\top \left[\Theta^\dagger \mathds{M}_- \mathds{H}(x)\right]^{-1} x \geq \alpha_2, \quad  \quad\quad\forall x \in X_u, \label{subeq: thm-uns}\\
			&\quad~\begin{bmatrix}
				P^{-1} & \mathds{X}_+ \mathds{H}(x)\\
				\star & P^{-1}
			\end{bmatrix} \succeq 0, \qquad\quad\quad\quad\!\!\!\! \forall x \in X,  \label{subeq: thm-dec}
		\end{align}
	\end{subequations}
	then, one can deduce that $\mathds{B}(x) = x^\top \overbrace{\left[\Theta^\dagger \mathds{M}_- \mathds{H}(x)\right]^{-1}}^{P} x$ and $u = \mathds{U}_- \mathds{H}(x) \overbrace{\left[\Theta^\dagger \mathds{M}_- \mathds{H}(x)\right]^{-1}}^{P} x$ are, respectively, a CBC and its corresponding safety controller for the unknown dt-NPS $\Sigma$.
\end{theorem}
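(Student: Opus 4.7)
The plan is to verify, one by one, the three defining conditions of a CBC in Definition~\ref{def: CBC} for the proposed $\mathds{B}$ and $u$. The key observation that organizes the whole argument is that hypothesis~\eqref{eq: H&P} collapses the apparently $x$-dependent expression $\Theta^\dagger \mathds{M}_- \mathds{H}(x)$ into the \emph{constant} matrix $P^{-1}$: using the left-pseudoinverse identity $\Theta^\dagger \Theta(x) = \mathds{I}_n$ (implicit in the very use of $\Theta^\dagger$, and requiring $\Theta(x)$ to have full column rank), one reads off $\Theta^\dagger \mathds{M}_- \mathds{H}(x) = \Theta^\dagger \Theta(x) P^{-1} = P^{-1}$. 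Hence the candidate $\mathds{B}(x) = x^\top P x$ is a bona fide positive definite quadratic form, and conditions~\eqref{subeq: thm-ini}--\eqref{subeq: thm-uns} read literally as $\mathds{B}(x) \leq \alpha_1$ on $X_0$ and $\mathds{B}(x) \geq \alpha_2$ on $X_u$, discharging~\eqref{subeq: initial} and~\eqref{subeq: unsafe}.

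The substantive step is the one-step decrease~\eqref{subeq: decreasing}. I would first invoke Lemma~\ref{lem: closed-loop} with the identification $\mathds{Q}(x) := \mathds{H}(x)\,P$; the lemma's hypothesis~\eqref{eq: cond_Q} then follows directly from~\eqref{eq: H&P}, since $\mathds{M}_-\mathds{Q}(x) = \mathds{M}_-\mathds{H}(x)\,P = \Theta(x)P^{-1}P = \Theta(x)$. The lemma therefore delivers the closed-loop data-based form $x(k+1) = \mathds{X}_+\mathds{H}(x(k))\,P\,x(k)$ under the proposed controller $u = \mathds{U}_-\mathds{H}(x)\,P\,x$. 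Substituting into $\mathds{B}(x(k+1)) = x(k+1)^\top P\,x(k+1)$, the desired inequality $\mathds{B}(x(k+1)) \leq \mathds{B}(x(k))$ is equivalent to the matrix inequality $P\,\mathds{H}(x)^\top\mathds{X}_+^\top P\,\mathds{X}_+\mathds{H}(x)\,P \preceq P$; pre- and post-multiplying by $P^{-1}\succ 0$ reduces this to
\[
(\mathds{X}_+\mathds{H}(x))^\top P\,(\mathds{X}_+\mathds{H}(x)) \preceq P^{-1},
\]
which is exactly the Schur complement of~\eqref{subeq: thm-dec} with respect to its $P^{-1}$ block, closing the argument.

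The main obstacle I anticipate is essentially careful bookkeeping: unwinding the nested expression $\Theta^\dagger \mathds{M}_- \mathds{H}(x)$ through~\eqref{eq: H&P}, matching the controller form of Lemma~\ref{lem: closed-loop} by the right choice $\mathds{Q}(x) = \mathds{H}(x)P$, and applying the Schur complement in the correct direction so that the ``$\succeq 0$'' of the block LMI exchanges for the ``$\preceq$'' of the scalar decrease condition. Once that algebra is laid out, each of the three CBC conditions falls out of a direct substitution, with Lemma~\ref{lem: closed-loop} supplying the essential data-based closed-loop dynamics that make~\eqref{subeq: decreasing} verifiable from the collected single trajectory alone.
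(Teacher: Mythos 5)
Your proposal is correct and follows essentially the same route as the paper's proof: collapsing $\Theta^\dagger \mathds{M}_- \mathds{H}(x)$ to $P^{-1}$ via~\eqref{eq: H&P} for the initial/unsafe conditions, choosing $\mathds{Q}(x)=\mathds{H}(x)P$ to invoke Lemma~\ref{lem: closed-loop}, and converting the decrease condition into $(\mathds{X}_+\mathds{H}(x))^\top P(\mathds{X}_+\mathds{H}(x)) \preceq P^{-1}$ via the Schur complement of~\eqref{subeq: thm-dec}. The only (welcome) addition is your explicit note that $\Theta^\dagger\Theta(x)=\mathds{I}_n$ presumes full column rank of $\Theta(x)$, which the paper leaves implicit.
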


\begin{proof}
	We first show the satisfaction of conditions~\eqref{subeq: initial} and~\eqref{subeq: unsafe}. As a result of condition~\eqref{eq: H&P}, one can readily deduce that the satisfaction of conditions~\eqref{subeq: thm-ini} and~\eqref{subeq: thm-uns} implies conditions~\eqref{subeq: initial} and~\eqref{subeq: unsafe} with $P =  \left[\Theta^\dagger \mathds{M}_- \mathds{H}(x)\right]^{-1}$:
	\begin{align*}
		\mathds{M}_- \mathds{H}(x) = \Theta(x) P^{-1} ~\to~~  \Theta^\dagger \mathds{M}_- \mathds{H}(x) = P^{-1} ~\to~~ \left[\Theta^\dagger \mathds{M}_- \mathds{H}(x)\right]^{-1} = P.
	\end{align*}
	 We now proceed with showing the fulfillment of condition~\eqref{subeq: decreasing}, as well. By employing the quadratic definition of CBC, we have
	\begin{align*}
		\mathds{B}(x(k + 1)) = x^\top (k+1) P x(k+1) = (A\mathcal{M}(x) + Bu)^\top P (A\mathcal{M}(x) + Bu).
	\end{align*}
	Then, according to~\eqref{eq: transformation}, and considering $u = \digamma(x)x$, one has
	\begin{align*}
		\mathds{B}(x(k \!+\! 1))  \!=\! (A \Theta(x)x \!+\! B \digamma(x)x)^\top P (A \Theta(x)x \!+\! B \digamma(x)x).
	\end{align*}
	Since according to~\eqref{eq: H&P}, one has $\mathds{M}_- \mathds{H}(x) P = \Theta(x)$, one can leverage~\eqref{eq: cond_Q} and set $\mathds{Q}(x) = \mathds{H}(x) P$, implying $\mathds{Q}(x) P^{-1} = \mathds{H}(x)$.  Since $A \Theta(x) + B \digamma(x) = \mathds{X}_+ \mathds{Q}(x)$ according to Lemma~\ref{lem: closed-loop}, we have
	\begin{align*}
		\mathds{B}(x(k \!+\! 1)) \! &=  x^\top \left[(\mathds{X}_+ \mathds{Q}(x))^\top P (\mathds{X}_+ \mathds{Q}(x))\right] x\\
		&= x^\top \! P^\top \!\! \left[P^{-1}(\mathds{X}_+ \mathds{Q}(x))^\top \! P (\mathds{X}_+ \mathds{Q}(x))P^{-1}\right] \!\! P x\\
		& = x^\top \! P^\top \!\! \left[(\mathds{X}_+ \mathds{H}(x))^\top \! P (\mathds{X}_+ \mathds{H}(x))\right] \!\! P x.
	\end{align*}
	One can also rewrite $\mathds B(x(k)) = x^\top P^\top P^{-1} P x$ since $P$ is symmetric ($P = P^\top$) and $P^{-1} P =\mathds{I}_n$.
	Hence, in order to demonstrate that $\mathds B(x(k+1)) \leq  \mathds B(x(k))$, it suffices to show
	\begin{align*}
		 (\mathds{X}_+ \mathds{H}(x))^\top  P (\mathds{X}_+ \mathds{H}(x)) \preceq P^{-1}
		\to~ P^{-1} - (\mathds{X}_+ \mathds{H}(x))^\top  P (\mathds{X}_+ \mathds{H}(x)) \succeq 0.
	\end{align*}
	According to the Schur complement~\cite{zhang2006schur} and condition~\eqref{subeq: thm-dec}, one has
	\begin{align*}
		P^{-1} \!-\! (\mathds{X}_+ \mathds{H}(x))^\top  \! P (\mathds{X}_+ \mathds{H}(x)) \! \succeq \! 0  \Leftrightarrow \begin{bmatrix}
			P^{-1} \!\!&\!\! \mathds{X}_+ \mathds{H}(x)\\
			\star \!\!&\!\! P^{-1}
		\end{bmatrix} \!\! \succeq \! 0, 
	\end{align*}
	resulting in fulfilling condition~\eqref{subeq: decreasing}. Hence, $\mathds{B}(x) = x^\top  \left[\Theta^\dagger \mathds{M}_- \mathds{H}(x)\right]^{-1}  x$ and $u = \mathds{U}_- \mathds{H}(x) \left[\Theta^\dagger \mathds{M}_- \mathds{H}(x)\right]^{-1} x$ are a CBC and its safety controller, respectively, thus concluding the proof.
\end{proof}

Having offered the main theorem of this work, in the following subsection, we propose a method upon which CBCs and corresponding safety controllers can be computed systematically.

\subsection{Computation of CBCs and Safety Controllers}
In this subsection, we first present the following lemma, which transforms conditions~\eqref{subeq: thm-ini}-\eqref{subeq: thm-dec} into a sum-of-squares (SOS) optimization program for computing CBCs and corresponding safety controllers. We then introduce an algorithm that outlines the necessary steps for synthesizing them.

\begin{lemma}[\textbf{SOS optimization program}]
	~Consider the state set $X$, the initial set $X_0$, and the unsafe set $X_u$, each of which is outlined by vectors of polynomial inequalities as $X = \{x \in \mathbb{R}^{n} |~ \beth(x) \geq 0\}$, $X_{0} = \{x \in \mathbb{R}^{n} |~ \beth_{0}(x) \geq 0\}$, and $X_{u} = \{x \in \mathbb{R}^{n} |~ \beth_{u}(x) \geq 0\}$, respectively. Then, $\mathds{B}(x) = x^\top  \left[\Theta^\dagger \mathds{M}_- \mathds{H}(x)\right]^{-1}  x$ is a CBC for the unknown dt-NPS~\eqref{eq: dt-NPS}, fulfilling all conditions in~\eqref{eq: key-conditions}, and $u = \mathds{U}_- \mathds{H}(x) \left[\Theta^\dagger \mathds{M}_- \mathds{H}(x)\right]^{-1} x$ is its safety controller if there exist a state-dependent-polynomial matrix $\mathds{H}(x) \in \mathbb{R}^{\mathcal{T} \times n}$, constants $\alpha_1, \alpha_2 \in \mathbb{R}^{+}$, with $\alpha_2 > \alpha_1$, and vectors of sum-of-squares polynomials  $\Lambda(x), \Lambda_0(x), \Lambda_u(x)$, such that
	\begin{subequations}\label{eq: SOS conditions}
		\begin{align}
			-& x^\top \left[\Theta^\dagger \mathds{M}_- \mathds{H}(x)\right]^{-1} x - \Lambda_0^\top(x) \beth_{0}(x) + \alpha_1,\label{subeq: lem-ini}\\
			& x^\top \left[\Theta^\dagger \mathds{M}_- \mathds{H}(x)\right]^{-1} x - \Lambda_u^\top(x) \beth_{u}(x) - \alpha_2,\label{subeq: lem-uns}\\
			& \begin{bmatrix}
				P^{-1} & \mathds{X}_+ \mathds{H}(x)\\
				\star & P^{-1}
			\end{bmatrix} - (\Lambda^\top(x) \beth(x))\mathds{I}_{2n}, \label{subeq: lem-dec}
		\end{align}
		are all SOS polynomials, while condition~\eqref{eq: H&P} is also satisfied.
	\end{subequations}
\end{lemma}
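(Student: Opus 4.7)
The plan is to reduce the three SOS requirements in~\eqref{eq: SOS conditions} pointwise to the three set-indexed conditions~\eqref{subeq: thm-ini}--\eqref{subeq: thm-dec} of Theorem~\ref{thm: main}, and then invoke that theorem directly; since condition~\eqref{eq: H&P} is already imposed as a standing hypothesis of the lemma, nothing further is needed for it. The whole argument is a standard Positivstellensatz-type multiplier argument, applied once in the scalar polynomial setting and once in the matrix-polynomial setting, so the main obstacle is just keeping track of the direction of inequalities.

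For~\eqref{subeq: thm-ini}, I would fix an arbitrary $x \in X_0$. By definition of $X_0$, every component of $\beth_0(x)$ is non-negative, and since $\Lambda_0(x)$ is a vector of SOS polynomials it is non-negative componentwise, so the scalar product $\Lambda_0^\top(x)\beth_0(x) \geq 0$. The SOS (hence non-negativity) of the expression in~\eqref{subeq: lem-ini} then yields
\begin{align*}
-\,x^\top \left[\Theta^\dagger \mathds{M}_- \mathds{H}(x)\right]^{-1} x - \Lambda_0^\top(x)\beth_0(x) + \alpha_1 \;\geq\; 0,
\end{align*}
which rearranges to $x^\top P x \leq \alpha_1 - \Lambda_0^\top(x)\beth_0(x) \leq \alpha_1$, giving~\eqref{subeq: thm-ini}. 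The same template applied to~\eqref{subeq: lem-uns} yields $x^\top P x \geq \alpha_2 + \Lambda_u^\top(x)\beth_u(x) \geq \alpha_2$ for every $x \in X_u$, which is~\eqref{subeq: thm-uns}.

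For the matrix inequality~\eqref{subeq: thm-dec}, I would run the analogous argument in the matrix-SOS sense. For any $x \in X$, $\Lambda^\top(x)\beth(x) \geq 0$ by the same multiplier reasoning, so $(\Lambda^\top(x)\beth(x))\mathds{I}_{2n} \succeq 0$ on $X$. Because the matrix polynomial in~\eqref{subeq: lem-dec} is SOS, it is positive semi-definite for all $x$; adding back the non-negative multiple of the identity that was subtracted therefore yields
\begin{align*}
\begin{bmatrix}
P^{-1} & \mathds{X}_+ \mathds{H}(x)\\
\star & P^{-1}
\end{bmatrix} \;\succeq\; (\Lambda^\top(x)\beth(x))\mathds{I}_{2n} \;\succeq\; 0 \qquad \forall x \in X,
\end{align*}
which is exactly~\eqref{subeq: thm-dec}. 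With all three conditions of Theorem~\ref{thm: main} verified, that theorem immediately identifies $\mathds{B}(x) = x^\top[\Theta^\dagger \mathds{M}_- \mathds{H}(x)]^{-1} x$ as a CBC and $u = \mathds{U}_- \mathds{H}(x)[\Theta^\dagger \mathds{M}_- \mathds{H}(x)]^{-1} x$ as its associated safety controller, completing the proof.
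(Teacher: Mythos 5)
Your proposal is correct and follows essentially the same route as the paper: pointwise non-negativity of the SOS multipliers $\Lambda_0^\top(x)\beth_0(x)$, $\Lambda_u^\top(x)\beth_u(x)$, $\Lambda^\top(x)\beth(x)$ on their respective sets, combined with the (matrix-)SOS non-negativity of the three expressions, rearranged to recover conditions~\eqref{subeq: thm-ini}--\eqref{subeq: thm-dec}, with~\eqref{eq: H&P} handled as a standing hypothesis so that Theorem~\ref{thm: main} applies. No substantive differences from the paper's argument.
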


\begin{proof}
	Having known that $\Lambda_0(x)$ is an SOS polynomial, we can deduce that $\Lambda_0^\top(x) \beth_{0}(x) \geq 0$ within $X_{0} = \{x \in \mathbb{R}^{n} |~ \beth_{0}(x) \geq 0\}$. Given that $x^\top \left[\Theta^\dagger \mathds{M}_- \mathds{H}(x)\right]^{-1} x$, with $ \left[\Theta^\dagger \mathds{M}_- \mathds{H}(x)\right]^{-1} = P \succ 0$, is a non-negative SOS polynomial, one can conclude that the satisfaction of~\eqref{subeq: lem-ini} implies the satisfaction of~\eqref{subeq: thm-ini}. Likewise, one can conclude this reasoning between~\eqref{subeq: lem-uns} and~\eqref{subeq: thm-uns}. We shall now continue with showing condition~\eqref{subeq: thm-dec}, as well. Since $\Lambda(x)$ is an SOS polynomial, one can deduce that $\Lambda^\top(x) \beth(x) \geq 0$ within $X = \{x \in \mathbb{R}^{n} |~ \beth(x) \geq 0\}$. Given that~\eqref{subeq: lem-dec} is also an SOS polynomial, we have
	\begin{align*}
		\begin{bmatrix}
			P^{-1} & \mathds{X}_+ \mathds{H}(x)\\
			\star & P^{-1}
		\end{bmatrix} - (\Lambda^\top(x) \beth(x))\mathds{I}_{2n} \succeq 0,
	\end{align*}
	deducing that the satisfaction of~\eqref{subeq: lem-dec} implies the satisfaction of~\eqref{subeq: thm-dec}, thus completing the proof.
\end{proof}

We present Algorithm~\ref{alg}, in which all the required steps of designing CBCs and safety controllers are outlined.

\begin{algorithm}[t!]
	\caption{Data-driven construction of CBCs and corresponding safety controllers}\label{alg}
	\begin{algorithmic}[1]
		\REQUIRE The state set $X$, the initial set $X_0$, the unsafe set $X_u$, and the choice of $\mathcal{M}(x)$\footnotemark
		\STATE Collect $\mathds{U}_-, \mathds{X}_-, \mathds{X}_+$ as in~\eqref{eq: data}
		\STATE Form $\Theta(x)$ and $\mathds{M}_-$ according to~\eqref{eq: transformation} and~\eqref{eq: M of M}
		\STATE Utilize \textsf{SOSTOOLS} to obtain $P$ and $\mathds{H}(x)$ that fulfill conditions~\eqref{subeq: lem-dec} and~\eqref{eq: H&P} simultaneously (see Remark~\ref{remark: invP} for more details)\label{step3}
		\STATE Given the constructed $\mathds{B}(x) = x^\top P x$, utilize  \textsf{SOSTOOLS} to design $\alpha_1$ and $\alpha_2$ as in~\eqref{subeq: lem-ini} and~\eqref{subeq: lem-uns}, respectively, where $\alpha_2 > \alpha_1$
		\ENSURE CBC $\mathds{B}(x)=x^\top P x$ and its corresponding safety controller $u = \mathds{U}_- \mathds{H}(x) P x$
	\end{algorithmic}
\end{algorithm}
\footnotetext{If an upper bound on the maximum degree of $\mathcal{M}(x)$ can be inferred based on physical insight of unknown system, $\mathcal{M}(x)$ can be selected to include all possible combinations of states up to that known upper bound.}

\begin{remark}[\textbf{Enforcement of SOS conditions}]\label{remark: tools}
	To effectively enforce conditions~\eqref{subeq: lem-ini}-\eqref{subeq: lem-dec}, one can utilize existing software tools such as \textsf{SOSTOOLS}~\cite{prajna2004sostools} with semi-definite programming (SDP) solvers like \textsf{SeDuMi}~\cite{sturm1999using}.
\end{remark}

\begin{remark}[\textbf{On Step~\ref{step3} of Algorithm~\ref{alg}}]\label{remark: invP}
	To fulfill condition~\eqref{subeq: lem-dec}, we define $\mathcal{Z} = P^{-1}$ while enforcing it to be a \emph{symmetric positive-definite} matrix, i.e., $\mathcal{Z} \succ 0$. After satisfying condition~\eqref{subeq: lem-dec} and designing $\mathcal{Z}$, the matrix $P$ is obtained as $\mathcal{Z}^{-1} = (P^{-1})^{-1} = P$. 
\end{remark}

\section{Simulation Results}\label{sec: simul}
We showcase the effectiveness of our proposed findings by implementing them on two physical case studies including a jet engine~\cite{anta2010sample} and a Lorenz system~\cite{lopez2019synchronization}, with \emph{unknown matrices} $A$ and $B$. The data collected for each case study together with matrices $\mathds{M}_-$ and $\Theta(x)$ are reported in Appendix. It is worth noting that all simulations were performed on a MacBook with an M2 chip and 32 GB of memory.

\textbf{Jet engine.} Regions of interest are given as $X = [-10, 10]^2$, $X_0 = [0, 2] \times [-2, 2]$, $X_u = [-5, -2.5]^2 \cup [2.5, 5]^2$. The key objective is to design a CBC and its safety controller for a jet engine with unknown model, while ensuring that the system's states remain within the comfort zone of $X \backslash X_u$ over an infinite time horizon. We select $\mathcal{M}(x)= \begin{bmatrix}
    x_1 & \! x_2 & \! x^2_1 & \! x_1  x_2 & \! x^2_2 & \! x^3_1 & \! x^2_1  x_2 & \! x_1  x^2_2 & \! x^3_2
\end{bmatrix}^\top$ and follow the first step of Algorithm~\ref{alg} to collect input-output trajectories $\mathds{U}_-,\mathds{X}_-,  \mathds{X}_+$, as specified in~\eqref{eq: data} over the time horizon of $\mathcal{T}=15$. 
 Subsequently, we compute $\Theta(x)$ and $\mathds{M}_-$ according to~\eqref{eq: transformation} and~\eqref{eq: M of M}, respectively. Using \textsf{SeDuMi} alongside with \textsf{SOSTOOLS}, we first obtain $\mathds{H}(x)$ and $P$ as
 \begin{align}\label{Martix-P}
    P = 10^{4}\times\begin{bmatrix}
           4.8273  &  0.0023\\
    0.0023  &  0.0161 
     \end{bmatrix}\!\!.
 \end{align}
Then, we compute the CBC using $P$ in \eqref{Martix-P}  as
\begin{align}\label{Barrier}
  \mathds{B}(x) \!=\!  48272.6605 x_1^2 \!+\! 46.9317 x_1 x_2 \!+\! 161.1994 x_2^2.
\end{align}
Based on the constructed \( \mathds{B}(x) \)  in \eqref{Barrier}, and satisfying conditions in~\eqref{subeq: lem-ini} and~\eqref{subeq: lem-uns}, we design \(\alpha_1 =  1.9392 \times 10^{5}\) and \(\alpha_2 = 3.03 \times 10^{5}\), where \( \alpha_2 > \alpha_1 \). A safety controller is also designed as 
\begin{align}\nonumber
u=&\:0.00162 x_1^3 - 0.0020741 x_1^2 x_2 + 0.00010883 x_1 x_2^2 - 8.9663 \times 10^{-6} x_2^3 - 0.095569 x_1^2\\\label{controller} & - 0.51382 x_1 x_2 - 0.0026997 x_2^2 + 88.3896 x_1 + 1000.0175 x_2.  
\end{align}

By leveraging Theorem~\ref{thm: model-based}, we ensure that all trajectories of the unknown jet engine, starting from \( X_0 = [0, 2] \times [-2, 2] \), remain within the safe set \(X \backslash X_u\) over an infinite time horizon. Fig. \ref{fig:b} displays the state trajectories of the unknown jet engine when the controller in~\eqref{controller} is applied, confirming compliance with the desired safety specification.

\begin{figure}[t!]
	\centering
	\includegraphics[width=0.4\linewidth]{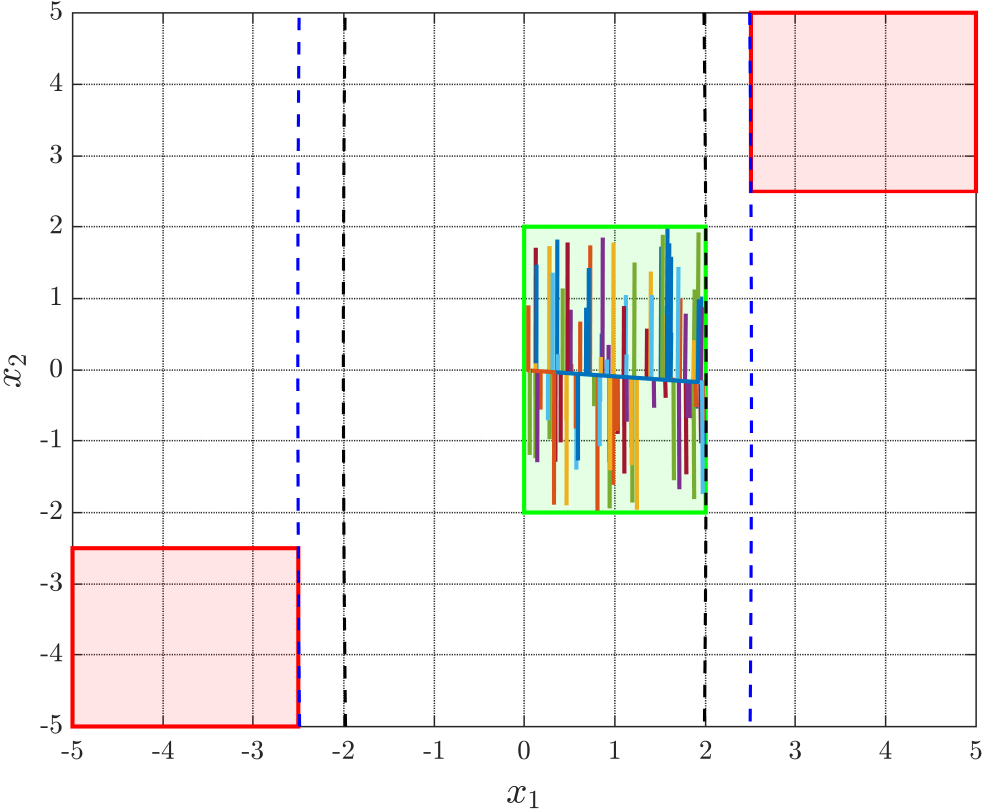}
	\caption{$100$ state trajectories of unknown jet engine with the designed controller in~\eqref{controller} starting from different initial conditions in~$X_0 \in  [0, 2] \times [-2, 2] $. Initial and unsafe regions are depicted by green \protect\greensquare\ and red \protect\redsquare\ boxes, while $\mathds B(x) = \alpha_1$ and $\mathds B(x) = \alpha_2$ are indicated by~\sampleline{dashed, black, thick} and~\sampleline{dashed, blue, thick}.}
	\label{fig:b}
\end{figure}

\textbf{Lorenz system.} A well-known example of a chaotic dynamical system is the \emph{Lorenz system}, which is often used in studies of complex behavior in nonlinear systems.  For this case study, the regions of interest are given as $X = [-12, 12]^3$, $X_0 = [0, 2] \times [-2, 2]^2$, and $X_u =  [-5, -2.5]^3 \cup [2.5, 5]^3$. To begin, we select the monomials as $\mathcal{M}(x) = \begin{bmatrix} x_1 & x_2 & x_3 & x_1x_2 & x_2x_3 & x_1x_3 \end{bmatrix}^\top$ and gather input-output trajectories $ \mathds{U}_-, \mathds{X}_-,\mathds{X}_+$ over a time horizon $\mathcal{T} = 15$. 

Next, we calculate $\Theta(x)$ and $\mathds{M}_-$ using~\eqref{eq: transformation} and~\eqref{eq: M of M}, respectively. By utilizing \textsf{SeDuMi} in conjunction with \textsf{SOSTOOLS}, we derive the matrices $\mathds{H}(x)$ and $P$, with the matrix $P$ reported as
\begin{align}\label{eq:P Lorenz}
    P = 10^{4}\times\begin{bmatrix}
           0.0636   &  -0.0343 & 0.0209\\
    -0.0343  &  0.1214  &  -0.0003\\
    0.0209  &  -0.0003  & 8.6481
     \end{bmatrix}\!\!.
\end{align}
Subsequently, the CBC is computed using $P$ in~\eqref{eq:P Lorenz} as
\begin{align}\label{eq:Barrier Lorenz}
    \mathds{B}(x)  &=636.2337 x_1^2 - 686.577 x_1 x_2 + 417.4737 x_1 x_3 + 1214.2754 x_2^2 - 5.3721 x_2 x_3 + 86480.554 x_3^2.
\end{align}

Based on the constructed $\mathds{B}(x)$ in~\eqref{eq:Barrier Lorenz}, and by fulfilling conditions~\eqref{subeq: lem-ini} and~\eqref{subeq: lem-uns} using \textsf{SOSTOOLS}, we compute the level sets $\alpha_1 =  3.5776 \times 10^{5}$ and $\alpha_2 = 5.5035 \times 10^{5}$, where $\alpha_2 > \alpha_1$. 

A safety controller is also synthesized as follows:
\begin{align}\label{eq:controller Lorenz}
    u = ~\!& 1.7859 \times 10^{-6} x_1^3 - 1.9979 \times 10^{-6} x_1^2 x_2 - 0.00010195 x_1^2 x_3 - 1.3839 \times 10^{-6} x_1 x_2^2 \notag\\
    &+ 0.00020084 x_1 x_2 x_3 - 0.00017269 x_1 x_3^2 \notag - 3.0862 \times 10^{-7} x_2^3 + 3.5187 \times 10^{-5} x_2^2 x_3 \notag\\
    &- 1.6423 \times 10^{-5} x_2 x_3^2 + 2.7836 \times 10^{-5} x_3^3 - 0.010953 x_1^2 + 1.0254 x_1 x_2 - 9.975 x_1 x_3 \notag\\
    &+ 0.0086013 x_2^2 - 0.39793 x_2 x_3 + 0.34484 x_3^2 + 252.5353 x_1 - 988.7662 x_2 + 23.0669 x_3.
\end{align}

\begin{figure}[t!]
	\centering
	\includegraphics[width=0.48\linewidth]{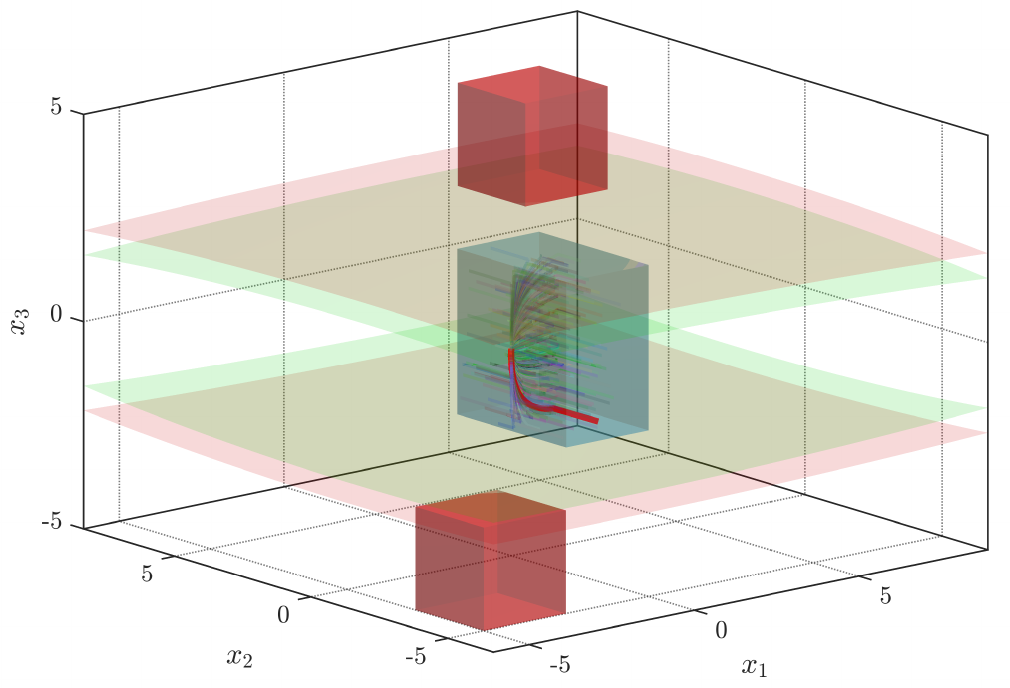}
	\caption{$250$ state trajectories of unknown Lorenz system and a representative trajectory \sampleline{red, very thick} under the designed controller in~\eqref{eq:controller Lorenz}, starting from different initial conditions in~$X_0 \in  [0, 2] \times [-2, 2]^2$. Initial and unsafe regions are depicted by blue \protect\bluesquare\ and red \protect\reddsquare\ boxes, while $\mathds B(x) = \alpha_1$ and $\mathds B(x) = \alpha_2$ are indicated by~\protect\greensquare\ and~\protect\redsquare\,, respectively.}
	\label{fig:b1}
\end{figure}

Using Theorem~\ref{thm: model-based}, we verify that all trajectories of the unknown Lorenz system, starting from the initial set $X_0$, remain within the safe set $X \backslash X_u$ over an infinite time horizon. Closed-loop state trajectories of the Lorenz system under the control law~\eqref{eq:controller Lorenz} are illustrated in Fig.~\ref{fig:b1}, demonstrating adherence to the specified safety criteria.

\section{Conclusion}\label{sec: conc}
In this work, we developed a \emph{direct} data-driven method for learning control barrier certificates (CBCs) and safety controllers for unknown discrete-time nonlinear polynomial systems, ensuring system safety over an infinite time horizon. By leveraging measured data from a single input-output trajectory, we were able to design a CBC and its corresponding safety controller directly from the finite-length observed data. The effectiveness of our approach was demonstrated through two case studies, including a jet engine and a chaotic Lorenz system. Developing a data-driven approach that encompasses \emph{more general discrete-time nonlinear systems} beyond polynomials is currently under investigation as future work.

\bibliographystyle{alpha}
\bibliography{biblio}

\newpage

\section{Appendix: Collected Data and Designed Matrices}

\subsection{Jet engine}
\quad\\\hspace*{2em}\textbf{\emph{Collected data.}}
\small\begin{align*}
	\mathds{U}_-=&\left[\begin{array}{ccccccc} 
		93.41 & 9.446 & 94.54 & 42.96 & 39.55 & -56.78 & 95.25 \\
	\end{array}\right.\\
	&\left.\begin{array}{cccccccc} 
		-98.75 & -49.4 & -13.04 & 55.88 & -60.46 & 72.6 & 96.68 & -67.23 
	\end{array}\right]\\
	\mathds{X}_- = &\left[\begin{array}{cccccccc} 
		0.025 & 0.02498 & 0.02505 & 0.02513 & 0.02531 & 0.02553 & 0.02579 & 0.02599 \\
		0.02 & -0.07338 & -0.0828 & -0.1773 & -0.2203 & -0.2598 & -0.203 & -0.2982 \\
	\end{array}\right.\\
	&\left.\begin{array}{ccccccc} 
		0.02629 & 0.02649 & 0.02663 & 0.02677 & 0.02696 & 0.02709 & 0.0273 \\
		-0.1994 & -0.15 & -0.1369 & -0.1928 & -0.1323 & -0.2048 & -0.3015 \\
	\end{array}\right]\\
	\mathds{X}_+ = &\left[\begin{array}{cccccccc} 
		0.02498 & 0.02505 & 0.02513 & 0.02531 & 0.02553 & 0.02579 & 0.02599 & 0.02629 \\
		-0.07338 & -0.0828 & -0.1773 & -0.2203 & -0.2598 & -0.203 & -0.2982 & -0.1994 \\
	\end{array}\right.\\
	&\left.\begin{array}{ccccccc} 
		0.02649 & 0.02663 & 0.02677 & 0.02696 & 0.02709 & 0.0273 & 0.0276 \\
		-0.15 & -0.1369 & -0.1928 & -0.1323 & -0.2048 & -0.3015 & -0.2342 \\
	\end{array}\right]\\
	\mathds{M}_- = &\left[\begin{array}{cccccc} 
		0.025 & 0.02498 & 0.02505 & 0.02513 & 0.02531 & 0.02553 \\
		0.02 & -0.07338 & -0.0828 & -0.1773 & -0.2203 & -0.2598 \\
		0.000625 & 0.000624 & 0.0006276 & 0.0006317 & 0.0006406 & 0.0006517 \\
		0.0005 & -0.001833 & -0.002074 & -0.004457 & -0.005575 & -0.006632 \\
		0.0004 & 0.005385 & 0.006856 & 0.03144 & 0.04851 & 0.06748 \\
		1.563e-5 & 1.559e-5 & 1.572e-5 & 1.588e-5 & 1.621e-5 & 1.664e-5 \\
		1.25e-5 & -4.579e-5 & -5.196e-5 & -0.000112 & -0.0001411 & -0.0001693 \\
		1.0e-5 & 0.0001345 & 0.0001718 & 0.0007902 & 0.001228 & 0.001723 \\
		8.0e-6 & -0.0003951 & -0.0005677 & -0.005575 & -0.01068 & -0.01753 \\
	\end{array}\right.\\
	&\left.\quad\begin{array}{cccccc} 
		0.02579 & 0.02599 & 0.02629 & 0.02649 & 0.02663 \\
		-0.203 & -0.2982 & -0.1994 & -0.15 & -0.1369 \\
		0.000665 & 0.0006755 & 0.000691 & 0.0007015 & 0.0007094 \\
		-0.005234 & -0.00775 & -0.005242 & -0.003972 & -0.003647 \\
		0.04119 & 0.08892 & 0.03977 & 0.0225 & 0.01875 \\
		1.715e-5 & 1.756e-5 & 1.816e-5 & 1.858e-5 & 1.889e-5 \\
		-0.000135 & -0.0002014 & -0.0001378 & -0.0001052 & -9.713e-5 \\
		0.001062 & 0.002311 & 0.001045 & 0.0005958 & 0.0004993 \\
		-0.008361 & -0.02652 & -0.00793 & -0.003374 & -0.002567 \\
	\end{array}\right.\\
\end{align*}
\small\begin{align*}
	&\left.\quad\quad\quad\quad\quad\quad\begin{array}{cccccc} 
		0.02677 & 0.02696 & 0.02709 & 0.0273 \\
		-0.1928 & -0.1323 & -0.2048 & -0.3015 \\
		0.0007166 & 0.0007269 & 0.000734 & 0.0007451 \\
		-0.00516 & -0.003566 & -0.00555 & -0.00823 \\
		0.03716 & 0.0175 & 0.04196 & 0.0909 \\
		1.918e-5 & 1.96e-5 & 1.989e-5 & 2.034e-5 \\
		-0.0001381 & -9.616e-5 & -0.0001504 & -0.0002247 \\
		0.0009947 & 0.0004718 & 0.001137 & 0.002481 \\
		-0.007163 & -0.002314 & -0.008596 & -0.02741 \\
	\end{array}\right]
\end{align*}
\quad\\\hspace*{2em}{\textbf{\emph{Transformation matrix $	\Theta$, and matrix $P$ designed by} \textsf{SOSTOOLS}.}}\footnote{Since we collected $15$ data points from the unknown jet engine system, we omit reporting $\mathds{H}(x)$ here for brevity.}
\begin{align*}
	\Theta(x) = \begin{bmatrix}
		1 & 0 & x_1 & 0 & 0 & x_1^2 & x_1 x_2 & 0 & 0 \\
		0 & 1 & 0 & x_1 & x_2 & 0 & 0 & x_1 x_2 & x_2^2
	\end{bmatrix}^\top\!\!\!\!\!\!,\quad \quad
	\quad P=\left[\begin{array}{cc} 48270.0 & 23.47\\ 23.47 & 161.2 \end{array}\right]\\
\end{align*}    

\subsection{Lorenz}
\quad\\\hspace*{2em}\textbf{\emph{Collected data.}}
\small\begin{align*}
	\mathds{U}_- = &\left[\begin{array}{cccccccccc}
		93.41 & 9.446 & 94.54 & 42.96 & 39.55 & -56.78 & 95.25 & -98.75 & -49.4 & -13.04
	\end{array}\right.\\
	&\left.\begin{array}{ccccc}
		55.88 & -60.46 & 72.6 & 96.68 & -67.23
	\end{array}\right]\\
	\mathds{X}_- = &\left[\begin{array}{cccccccccc}
		1.5 & 1.5 & 1.501 & 1.503 & 1.506 & 1.51 & 1.515 & 1.519 & 1.525 & 1.53\\
		1.5 & 1.632 & 1.679 & 1.811 & 1.892 & 1.969 & 1.949 & 2.082 & 2.021 & 2.009 \\
		1.5 & 1.498 & 1.497 & 1.495 & 1.493 & 1.491 & 1.49 & 1.488 & 1.486 & 1.484
	\end{array}\right.\\
	&\left.\begin{array}{ccccc}
		1.534 & 1.539 & 1.545 & 1.551 & 1.557 \\
		2.034 & 2.127 & 2.105 & 2.215 & 2.35 \\
		1.483 & 1.481 & 1.479 & 1.478 & 1.476
	\end{array}\right]
	\\
	\mathds{X}_+=  &\left[\begin{array}{cccccccccc}
		1.5 & 1.501 & 1.503 & 1.506 & 1.51 & 1.515 & 1.519 & 1.525 & 1.53 & 1.534 \\
		1.632 & 1.679 & 1.811 & 1.892 & 1.969 & 1.949 & 2.082 & 2.021 & 2.009 & 2.034\\
		1.498 & 1.497 & 1.495 & 1.493 & 1.491 & 1.49 & 1.488 & 1.486 & 1.484 & 1.483
	\end{array}\right.\\
	&\left.\begin{array}{ccccc}
		1.539 & 1.545 & 1.551 & 1.557 & 1.565 \\
		2.127 & 2.105 & 2.215 & 2.35 & 2.32\\
		1.481 & 1.479 & 1.478 & 1.476 & 1.474
	\end{array}\right]\\
\end{align*}
\begin{align*}
	\mathds{M}_-=   &\left[\begin{array}{cccccccccc}
		1.5 & 1.5 & 1.501 & 1.503 & 1.506 & 1.51 & 1.515 & 1.519 & 1.525 & 1.53 \\
		1.5 & 1.632 & 1.679 & 1.811 & 1.892 & 1.969 & 1.949 & 2.082 & 2.021 & 2.009  \\
		1.5 & 1.498 & 1.497 & 1.495 & 1.493 & 1.491 & 1.49 & 1.488 & 1.486 & 1.484 \\
		2.25 & 2.447 & 2.521 & 2.723 & 2.85 & 2.973 & 2.953 & 3.163 & 3.081 & 3.073 \\
		2.25 & 2.445 & 2.513 & 2.708 & 2.825 & 2.936 & 2.904 & 3.098 & 3.003 & 2.982  \\
		2.25 & 2.247 & 2.247 & 2.247 & 2.249 & 2.252 & 2.256 & 2.26 & 2.266 & 2.271
	\end{array}\right.\\
	&\left.\begin{array}{ccccc}
		1.534 & 1.539 & 1.545 & 1.551 & 1.557\\
		2.034 & 2.127 & 2.105 & 2.215 & 2.35 \\
		1.483 & 1.481 & 1.479 & 1.478 & 1.476\\
		3.12 & 3.275 & 3.252 & 3.435 & 3.659\\
		3.015 & 3.151 & 3.113 & 3.273 & 3.468\\
		2.275 & 2.28 &  2.286 & 2.292 & 2.299
	\end{array}\right]
\end{align*}
\quad\\\hspace*{2em}{\textbf{\emph{Transformation matrix $	\Theta$, and matrix $P$ designed by} \textsf{SOSTOOLS}.}}\footnote{Since we collected $15$ data points from the unknown Lorenz system, we omit reporting $\mathds{H}(x)$ here for brevity.}
	\small\begin{align*}
		\Theta(x) =\begin{bmatrix}
			1 & 0 & 0 & x_2 & 0 & 0 \\
			0 & 1 & 0 & 0 & x_3 & 0 \\
			0 & 0 & 1 & 0 & 0 & x_1
		\end{bmatrix}^\top\!\!\!\!\!\!,\quad \quad
		\quad P= \left[\begin{array}{ccc} 636.2 & -343.3 & 208.5\\ -343.3 & 1214.0 & -3.128\\ 208.5 & -3.128 & 86450.0 \end{array}\right] \\
	\end{align*}

\end{document}